\newtheorem{theorem}{Theorem}
\newcommand{\beginmethod}{%
       \setcounter{section}{0}
        \renewcommand{\thesection}{\Roman{section}}%
     }
\def\maketitle{
\@author@finish
\title@column\titleblock@produce
\suppressfloats[t]}
\begin{document}
\title{Structures in higher-order quantum correlations due to non-spatial symmetries}
\author{Li Sun}
\author{Chong Chen}
\affiliation{Department of Physics, The Chinese University of Hong Kong, Shatin, New Territories, Hong Kong, China}%
\affiliation{The State Key Laboratory of  Quantum Information Technologies and Materials, The Chinese University of Hong Kong, Shatin, New Territories, Hong Kong, China}
\affiliation{New Cornerstone Science Laboratory, The Chinese University of Hong Kong, Shatin, New Territories, Hong Kong, China}%

\author{Ren-Bao Liu}
\email{rbliu@cuhk.edu.hk}
\affiliation{Department of Physics, The Chinese University of Hong Kong, Shatin, New Territories, Hong Kong, China}%
\affiliation{The State Key Laboratory of  Quantum Information Technologies and Materials, The Chinese University of Hong Kong, Shatin, New Territories, Hong Kong, China}%
\affiliation{New Cornerstone Science Laboratory, The Chinese University of Hong Kong, Shatin, New Territories, Hong Kong, China}%
\affiliation{Centre for Quantum Coherence, The Chinese University of Hong Kong, Shatin, New Territories, Hong Kong, China}%

\begin{abstract} 
 Quantum nonlinear spectroscopy (QNS) via a quantum sensor can access $2^{n-1}$ types of $n$-th order contour-time-ordered correlations (CTOCs) arising from different orderings of quantum operators, while classical nonlinear spectroscopy can detect only one in each order.  QNS and its classical counterpart have similar spatial symmetry properties, but they are expected to have characteristically different non-spatial symmetry properties since different orderings of operators can behave differently under non-spatial transformations (such as exchange of operators). Here, we investigate how higher-order correlations extracted by QNS are constrained by non-spatial symmetries, including particle-hole (C), time-reversal (T), chiral (S) symmetry, and time translation symmetry. We find that the generalized C-symmetry imposes special selection rules on QNS, and the generalized T- and S-symmetry relate CTOCs to out-of-time-order correlations (OTOCs). The time translation symmetry leads to a generalized fluctuation-dissipation theorem for the spectra of higher-order CTOCs and OTOCs. This work discloses deep structures in higher-order quantum correlations due to non-spatial symmetries and provides access to certain types of OTOCs that are not directly observable.
\end{abstract}

\maketitle
\newpage

\section*{Introduction}
\noindent
 Quantum correlations between physical quantities reflect the dynamics and structure of quantum systems. The study of quantum correlations focuses mainly on first- and second-order correlations since they are sufficient to construct the higher-order ones when the fluctuations have Gaussian statistics, which is often the case in a macroscopic system. Recently, the importance of high-order quantum correlations is revealed in the study of finite-size many-body systems~\cite{schweigler2017experimental}, open quantum systems~\cite{PhysRevLett.123.050603}, quantum sensing~\cite{laraoui2013high, pfender2019high, wang2021classical, PhysRevLett.130.070802}, quantum chaos and information scrambling~\cite{hosur2016chaos, PhysRevX.8.021013, PhysRevLett.126.030601, PhysRevA.103.062214,PhysRevA.94.040302, PhysRevLett.124.200504}, 
 and quantum foundation~\cite{tura2015nonlocality, PhysRevA.98.022116}. Higher-order quantum correlations have rich structures. In particular, because quantities in quantum mechanics in general do not commute, the orderings of the quantities result in different correlations. For example,
 correlations of physical quantities at different times can be classified into contour-time-ordered correlations (CTOCs) and out-of-time-ordered correlations (OTOCs), depending on whether the time ordering is violated~\cite{haehl2019classification}. OTOCs are important in theories of quantum information scrambling and quantum chaos in many-body systems~\cite{hosur2016chaos, PhysRevX.8.021013, PhysRevLett.126.030601, PhysRevA.103.062214,PhysRevA.94.040302, PhysRevLett.124.200504}.
 Furthermore, the quantities in the correlations can form different concatenations of commutators or anti-commutators~\cite{PhysRevLett.123.050603}.

Nonlinear spectroscopy is a widely used approach to extracting correlations. However, it can access only one type of CTOCs in each order, namely, $\left\langle i^{n} \left[\left[\left[\hat{B}_{n+1},\hat{B}_{n}\right],\ldots,\hat{B}_2\right], \hat{B}_1\right]\right\rangle$ (which involves only time-ordered commutators between the operators) in the $n$-th order nonlinear response. This limitation is due to the fact that under the driving of a ``classical force'' $f\left(t\right)$ coupled to a ``displacement'' operator $\hat{B}\left(t\right)$, the evolution (and hence response) of a quantum system (described by a state $\hat\rho$) is governed by the commutator $-i\left[f\left(t\right)\hat{B}\left(t\right),\hat{\rho}\right]=-if\left(t\right)\left[\hat{B}\left(t\right),\hat{\rho}\right]$. Noise spectroscopy~\cite{crooker2004spectroscopy, liu2010dynamics,sinitsyn2016theory}
 is another method to measure the CTOCs, but it can access only those involving solely anti-commutators (corresponding to correlations of classical noises). In the second order, there are only two types of correlations, corresponding to the commutator and the anti-commutator of two quantities, which can be extracted using linear response and noise spectroscopy, respectively. However, in general, conventional nonlinear spectroscopy and noise spectroscopy are insufficient to extract the full set of CTOCs, which has $2^{n-1}$ different types in the $n$-th order corresponding to different concatenations of commutators and anti-commutators of $n$ operators. It has recently been demonstrated that the quantum sensing approach can be used to extract the entire set of CTOCs, which is called quantum nonlinear spectroscopy (QNS)~\cite{PhysRevLett.123.050603, PhysRevLett.132.200802, meinel2022quantum}.
However, there are still $\left(n!-2^{n-1}\right)$ types of $n$-th order OTOCs  left beyond direct access because the dynamics of any physical system fulfills the causality and therefore involves only CTOCs. The (approximate) extraction of OTOCs requires (approximately) time-reversed evolution, which imposes a substantial experimental challenge for studying many-body systems~\cite{PhysRevX.7.031011,garttner2017measuring, PhysRevX.9.021061, PhysRevLett.128.140601, PhysRevA.108.062608}.

Symmetry is a cornerstone of physics. The consideration of symmetry not only simplifies the analysis of systems, but also constrains physical theories.
Symmetry analysis is particularly useful in nonlinear spectroscopy
since the higher-order responses of a system to external perturbations are complex due to the existence of many possible spatial configurations of the external forces and the responses~\cite{shen1984principles}. 
Especially for solid-state systems, the point groups, which describe the crystallographic symmetries, simplify nonlinear spectroscopy by selection rules~\cite{shen1984principles}.

QNS, like its classical counterpart, have structures arising from symmetries. Spatial symmetries have similar effects on quantum and classical nonlinear spectroscopy. But QNS is expected to have distinct structures due to non-spatial symmetries. For example, under matrix transpose or complex conjugation transforms, the correlations $\left\langle\frac{1}{2}\left(\hat{A}\hat{B}+\hat{B}\hat{A}\right)\right\rangle$ and $\left\langle i\left(\hat{A}\hat{B}-\hat{B}\hat{A}\right)\right\rangle$ behave differently. Therefore, non-spatial symmetries, such as time reversal (T), particle-hole (C), and their combination, chiral symmetry (S), and time translation invariance, are expected to induce unique structures in QNS, or, more generally, in higher-order quantum correlations. The T-, C-, and S-symmetries rigorously categorize Hamiltonians within the tenfold internal symmetry classes~\cite{RevModPhys.88.035005}. They are useful in classifying Anderson transitions~\cite{RevModPhys.80.1355} and topological insulators and superconductors~\cite{RevModPhys.88.035005}, and in characterizing quantum chaos~\cite{haake1991quantum}. The time-translation invariance leads to the fluctuation-dissipation theorem (FDT) between the two types of second-order correlations in systems at thermal equilibrium, which establishes a profound connection between the fluctuations and the response or dissipation of the systems~\cite{kubo1966fluctuation, PhysRev.83.34}.
However, the structures in higher-order quantum correlations due to the discrete non-spatial symmetries (C, T, and S) have not been charted, and no FDT-like relations are known beyond the second-order correlations except for classical systems~\cite{PhysRevLett.87.040601, PhysRevA.86.043818} and for a special class of OTOCs in quantum systems~\cite{PhysRevE.97.012101}.

Here, we present the discovery of the structures in higher-order quantum correlations resulting from non-spatial symmetries. We obtain selection rules of QNS due to C-symmetry (Theorem~\ref{theorem_C}), which can reduce the number of CTOCs (among all $2^{n-1}$ types of $n$-th order CTOCs) that need to be measured. By Theorem~\ref{theorem_T} and~\ref{theorem_S}, we demonstrate that T- and S-symmetry induce connections between OTOCs of different ranks (including the rank-1 OTOCs, i.e., CTOCs), which may enable the measurement of certain types of rank-2 OTOCs of T- or S-symmetric systems using QNS without the requirement of (unphysical) time-arrow reversal. For time translation invariant systems, we establish a generalized FDT for higher-order quantum correlations, with the original FDT being a special case in the second order.  We find that the FDT-like relations in higher-order correlations involve both CTOCs and OTOCs, which allows us to obtain the spectra of OTOCs from those of CTOCs. We demonstrate these structures arising from the non-spatial symmetries using the transverse-field Ising model (TFIM) as a case study.

\section*{Results}
\subsection*{Classification of high-order quantum correlations}
The correlations of $n$ physical operators can be expanded in the $n!$ Wightman correlations~\cite{haehl2019classification}
\begin{align}\label{eq:Wightman}
    W_{{\sigma},n} \equiv \operatorname{Tr}\left[\hat{B}_{\sigma\left(n\right)} \hat{B}_{\sigma\left(n-1\right)} \cdots \hat{B}_{\sigma\left(1\right)} \hat{\rho}\right],
\end{align}
where $\hat{B}_{k_i}\equiv \hat{B}_{k_i}\left(t_i\right)$ is an operator at time $t_i$ in the interaction picture, $\sigma$ denotes a permutation of $\left(1,2,\ldots,n\right)$,  and $\hat{\rho}$ is the initial state of the system.
By linear combination, the Wightman correlations in Eq.~\eqref{eq:Wightman} can be rewritten in a more physical form as
\begin{align}\label{eq:physical}
    C^{\boldsymbol \eta}_{\sigma,n}={\rm Tr}{\left[{\mathbb B}^{+}_{\sigma\left(n\right)}{\mathbb B}^{\eta_{n-1}}_{\sigma\left(n-1\right)}\cdots{\mathbb B}^{\eta_1}_{\sigma\left(1\right)}\hat{\rho}\right]},
\end{align}
where the superscripts $\eta_{i}=\pm$ and the superoperators are defined as  ${\mathbb B}^+ \hat{A}\equiv\frac{1}{2}\{\hat{B}, \hat{A}\}$ (essentially an anti-commutator) and ${\mathbb B}^{-} \hat{A}\equiv -i\left[\hat{B}, \hat{A}\right]$ (essentially a commutator). The term ${\mathbb B}^{-}_i\hat{\rho}$ represents the evolution of the system driven by a force coupled to $\hat{B}_i$, and ${\mathbb B}^{+}_i\hat{\rho}$ has the physical meaning of the noise from the system as indicated by the fact that ${\rm Tr}\left[{\mathbb B}^{+}_i\hat{\rho} \right] = \langle {B}_i\rangle$. 
The dynamics of a system is fully characterized by the so-called CTOCs, in which the operators are contour-time-ordered [see Fig.~\ref{fig:CTOC} (a)]. As can be seen from Fig.~\ref{fig:CTOC} (a), the trace in Eq.~\eqref{eq:Wightman}  is the same whether the last operator $B_n$ appears on the forward or backward branch, so there are $2^{n-1}$ types of $n$-th order CTOCs. 
A correlation in the physical form $ C^{\boldsymbol \eta}_{\sigma,n}$ is a CTOC if $t_{\sigma(1)}\le t_{\sigma(2)}\le\cdots\le t_{\sigma(n)}$.
In particular, all second-order correlations are CTOCs, with  $C^{+-}_{2,1}=-i\left\langle\left[\hat{B}_2,\hat{B}_1\right]\right\rangle$ describing the propagation dynamics and $C^{++}_{2,1}=\left\langle\frac{1}{2}\left\{\hat{B}_2,\hat{B}_1\right\}\right\rangle$ the fluctuation correlation. Note that these two 2nd-order CTOCs, through linear combination, can be transformed to the advanced and retarded Green's functions or the Keldysh-Kadanoff-Baym Green's functions~\cite{economou2006green}.   In addition to CTOCs, there are $n!-2^{n-1}$ types of $n$-th order correlations in which the operators appear in more than one contours of time, called rank-$k$ OTOCs if $k$ time-contours are needed [see Fig.~\ref{fig:CTOC}{(b)} for an example of rank-3 OTOC]~\cite{haehl2019classification}.

\begin{figure}
 \centering
 \includegraphics[width=0.40\textwidth]{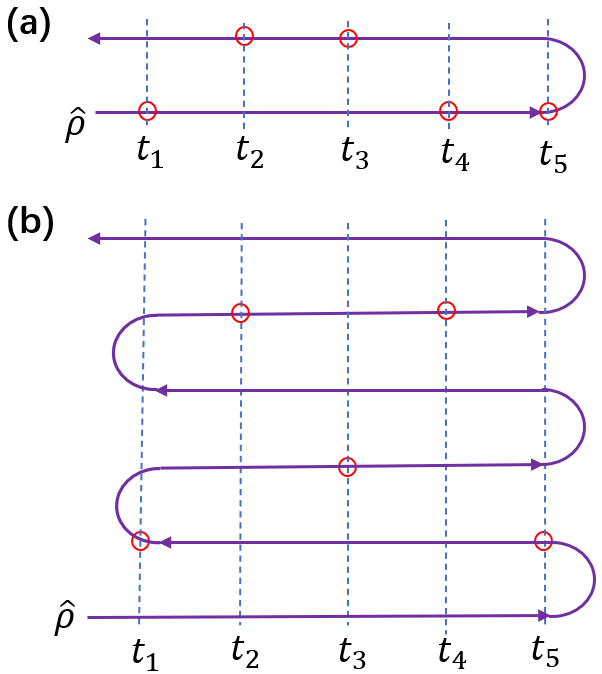}
 \caption{\textbf{Examples of the 5-th order correlations}. (a) CTOC ${\rm Tr}\left[\hat{B}_2\hat{B}_3\hat{B}_5\hat{B}_4\hat{B}_1\hat{\rho}\right]$ (i.e., $ {\rm Tr}\left[\hat{B}_5\hat{B}_4\hat{B}_1\hat{\rho}\hat{B}_2\hat{B}_3\right]$ by cyclic permutation in the trace), with a time-ordered sequence of operators $\hat{B}_5\hat{B}_4\hat{B}_1\hat{\rho}$ on the forward branch of the time-contour and an anti-time ordered sequence $\hat{B}_2\hat{B}_3$ on the backward branch.  {(b) Rank-3 OTOC ${\rm Tr}\left[\hat{B}_4\hat{B}_2\hat{B}_3\hat{B}_1\hat{B}_5\hat{\rho}\right]$, which needs three time-contours to arrange the operators. Here we set $t_1<t_2<\cdots <t_5$.}}
\label{fig:CTOC}
\end{figure}

\subsection*{Structures in higher-order correlations due to C-, T-, and S-symmetry}
A system is regarded to have {\em generalized} C-, T-, and S-symmetry, if the {\em generalized} particle-hole exchange operation (${\mathcal C}$), time reversal (${\mathcal T}$), or their combination (${\mathcal S}\equiv{\mathcal C}{\mathcal T}$) correspondingly transfers its Hamiltonian $H$ and initial state $\rho$, both being Hermitian matrices in a given basis, such that~\cite{PhysRevB.55.1142, RevModPhys.87.1037, RevModPhys.88.035005, PRXQuantum.4.040312} 
\begin{subequations}
\begin{align}
    {\mathcal C} H^T {\mathcal C}^{-1}  =- {H},  & \ \ {{\mathcal C}}  { \rho}^T  {{\mathcal C}} ^{-1} = {\rho};
    \label{eq:C-transform}
    \\
 {{\mathcal T}}  { H}^{\ast} {{\mathcal T}}^{-1}  =+{H}, & \ \ {{\mathcal T}}  { \rho}^{\ast}  {{\mathcal T}} ^{-1} = {\rho};
  \label{eq:T-transform}
 \\
  { {\mathcal S}} {H} {{\mathcal S}}^{-1}  =-{H}, & \ \ {{\mathcal S}}  { \rho}  {{\mathcal S}} ^{-1} = {\rho}.
\label{eq:S-transform}    
\end{align}
\end{subequations}
Here, the symmetries are called ``generalized'' in the sense that the transforms ${\mathcal C}$, ${\mathcal T}$, and ${\mathcal S}$ are only required to be invertible and do not need to be unitary matrices or to satisfy the conditions ${\mathcal C} {\mathcal C}^{\ast}=\pm I$, ${\mathcal T} {\mathcal T}^{\ast}=\pm I$, and ${\mathcal S}^2= I$~\cite{PhysRevB.55.1142, RevModPhys.87.1037, RevModPhys.88.035005, PRXQuantum.4.040312}. 
In particular, the generalized T-symmetry here does not lead to Kramers' degeneracy.

\begin{figure}
 \centering
 \includegraphics[width=0.8\textwidth]{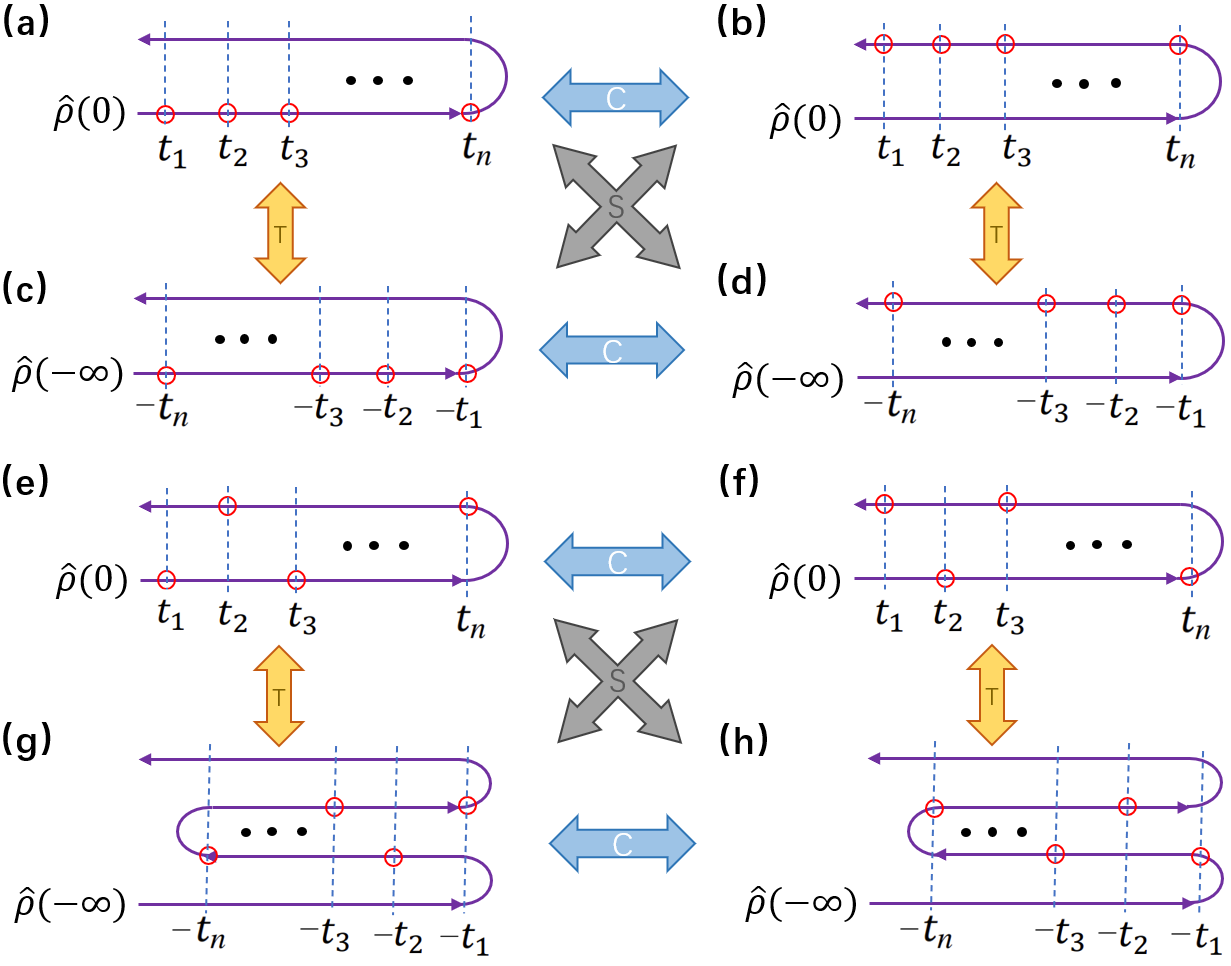}
 \caption{\textbf{Wightman correlations and their relations under ${\mathcal C}$-, ${\mathcal T}$-, and ${\mathcal S}$-transforms.} (a-d) CTOCs for operators in ascending and descending time sequences. (e,f) General rank-1 OTOCs (i.e., CTOCs). (g, h) Rank-2 OTOCs.  Correlations in (a/c/e/g) are correspondingly related to those in (b/d/f/h) by ${\mathcal C}$-transform,  (a/b/e/f) to (c/d/g/h) by ${\mathcal T}$-transform, and  (a/b/e/f) to (d/c/h/g) by ${\mathcal S}$-transform, as indicated by the arrows marked by C, T, and S, correspondingly.}
\label{fig:W_CST}
\end{figure}

The generalized C-symmetry leads to selection rules for high-order quantum correlations as summarized in the following theorem.
\begin{theorem}\label{theorem_C}
    When the system is C-symmetric and the physical quantities of interest are C-symmetric/anti-symmetric, i.e., ${\mathcal C}   { B}_{k_i} ^T {\mathcal C} ^{-1} =\alpha_i { B}_{k_i}$ for $\alpha_i=+/- 1$, respectively, the correlations in the Wightman basis satisfy 
\begin{align}\label{eq:theorem_C}
        W_{\tilde{{\sigma}},n}= W_{\sigma,n} \prod_{i=1}^n \alpha_i ,
    \end{align}
    where $\tilde{\sigma}\left(i\right)\equiv\sigma\left(n-i+1\right)$ denotes the reversed order of $\sigma$, and the physical correlations satisfy the selection rule
    \begin{equation}\label{eq:theorem_C_selection}
            C^{\boldsymbol \eta}_{\sigma, n}=0\ \  {\rm if }\ \  
            \prod_{j=1}^n\eta_j+\prod_{j=1}^n \alpha_j=0. 
    \end{equation}
\end{theorem}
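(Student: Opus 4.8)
\noindent The plan is to combine the transpose-invariance of the trace, $\Tr[M]=\Tr[M^{T}]$, with conjugation by $\C$; only the invertibility of $\C$ and the invariance of the trace under conjugation are needed, so the non-unitarity of the generalized $\C$ is irrelevant. Everything rests on a single lemma about the interaction-picture operators: writing $\hat B_i=\hat B_{k_i}(t_i)=e^{iHt_i}\hat B_{k_i}e^{-iHt_i}$, I claim $\C\,\hat B_i^{\,T}\,\C^{-1}=\alpha_i\hat B_i$ and $\C\,\hat{\rho}_{\rm B}^{\,T}\,\C^{-1}=\hat{\rho}_{\rm B}$. For the first, note $\hat B_i^{\,T}=e^{-iH^{T}t_i}\hat B_{k_i}^{\,T}e^{iH^{T}t_i}$; conjugating by $\C$ and invoking Eq.~\eqref{eq:C-transform} turns each factor $e^{\mp iH^{T}t_i}$ into $e^{\mp i(\C H^{T}\C^{-1})t_i}=e^{\pm iHt_i}$, so the two exponentials reassemble into the Heisenberg phase of $\hat B_{k_i}$ while the bare operator contributes the hypothesized $\alpha_i$; the statement for $\hat{\rho}_{\rm B}$ is the plain $\C$-transform, since in the Wightman string the state carries no evolution factors.

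To prove Eq.~\eqref{eq:theorem_C}, I would start from $W_n^{\sigma}=\Tr[(\hat B_{\sigma(n)}\cdots\hat B_{\sigma(1)}\hat{\rho}_{\rm B})^{T}]=\Tr[\hat{\rho}_{\rm B}^{\,T}\hat B_{\sigma(1)}^{\,T}\cdots\hat B_{\sigma(n)}^{\,T}]$, insert $\C^{-1}\C$ between consecutive factors, cancel the outermost $\C^{\pm1}$ by cyclicity, and apply the lemma to each factor. This yields $W_n^{\sigma}=\big(\prod_{i=1}^{n}\alpha_i\big)\,\Tr[\hat B_{\sigma(1)}\cdots\hat B_{\sigma(n)}\hat{\rho}_{\rm B}]$, and the remaining trace is by definition $W_n^{\tilde\sigma}$, since reading the operator string outward from $\hat{\rho}_{\rm B}$ reverses the permutation $\sigma$. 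As every $\alpha_i=\pm1$, multiplying both sides by $\prod_i\alpha_i$ gives Eq.~\eqref{eq:theorem_C} (the content of the C-labelled arrows in Fig.~\ref{fig:W_CST}).

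For the selection rule Eq.~\eqref{eq:theorem_C_selection}, I would work directly on $C_n^{\bm\eta}=\Tr[\mathbb{B}_n^{+}\mathbb{B}_{n-1}^{\eta_{n-1}}\cdots\mathbb{B}_1^{\eta_1}\hat{\rho}_{\rm B}]$ with the convention $\eta_n\equiv+$. Two facts combine. (i) Transposition leaves an anticommutator invariant and flips the sign of a commutator, so $(\mathbb{B}_i^{\eta_i}\hat A)^{T}=\eta_i\,\widetilde{\mathbb{B}}_i^{\eta_i}(\hat A^{T})$, where $\widetilde{\mathbb{B}}_i^{\eta_i}$ is the same superoperator built from $\hat B_i^{\,T}$; iterating through the nested string and taking the trace extracts an overall $\prod_{j=1}^{n}\eta_j$. (ii) By linearity of the (anti)commutator and the lemma, $\C\,\widetilde{\mathbb{B}}_i^{\eta_i}(\hat A)\,\C^{-1}=\alpha_i\,\mathbb{B}_i^{\eta_i}(\C\hat A\C^{-1})$, so conjugating the whole expression by $\C$ — using $\C\hat{\rho}_{\rm B}^{\,T}\C^{-1}=\hat{\rho}_{\rm B}$ and conjugation-invariance of the trace — extracts $\prod_{j=1}^{n}\alpha_j$ and rebuilds $C_n^{\bm\eta}$. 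Together these give $C_n^{\bm\eta}=\big(\prod_{j=1}^{n}\eta_j\big)\big(\prod_{j=1}^{n}\alpha_j\big)\,C_n^{\bm\eta}$, so $C_n^{\bm\eta}=0$ whenever $\prod_j\eta_j=-\prod_j\alpha_j$, i.e.\ whenever $\prod_j\eta_j+\prod_j\alpha_j=0$. (Equivalently, one may expand each $C_n^{\bm\eta}$ in the Wightman basis, apply Eq.~\eqref{eq:theorem_C} to every monomial, and observe that $\sigma\mapsto\tilde\sigma$ permutes the monomials of a given $C_n^{\bm\eta}$ among themselves up to the scalar $\prod_j\eta_j$; the transpose argument is just the compact packaging of this bookkeeping.)

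The main obstacle is bookkeeping rather than anything conceptual: one must fix a single interaction-picture convention and apply $\C H^{T}\C^{-1}=-H$ with the sign that makes the two evolution exponentials recombine into $e^{iHt_i}$ rather than $e^{-iHt_i}$, and, in the selection-rule step, keep straight two independent signs — the $\eta_i$ produced by transposing a commutator versus an anticommutator, and the $\alpha_i$ produced by the $\C$-conjugation. Once these are tracked consistently, the remaining manipulations are routine.
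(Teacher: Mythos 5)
Your proposal is correct and follows essentially the same route as the paper: the key step in both is to use $\Tr[A]=\Tr[A^{T}]$ together with Eq.~\eqref{eq:C-transform} to establish $\C\,\hat B_{j}^{\,T}\,\C^{-1}=\alpha_{j}\hat B_{j}$ for the interaction-picture operators, insert $\C^{-1}\C$ throughout the transposed trace, and identify the reversed string with $W_{n}^{\tilde\sigma}$. For the selection rule the paper simply cites the sign pattern of the Wightman expansion coefficients under $\sigma\mapsto\tilde\sigma$, while you package the same bookkeeping as an explicit transposition identity for the superoperators, $(\mathbb{B}_i^{\eta_i}\hat A)^{T}=\eta_i\widetilde{\mathbb{B}}_i^{\eta_i}(\hat A^{T})$, which is a valid and slightly more self-contained rendering of the identical argument.
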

\begin{proof}
Using the identity ${\rm Tr}\left[{A}\right]={\rm Tr}\left[{A}^T\right]$, we get
\begin{align}\label{eq:W_transposed}
    W_{\sigma,n} =\operatorname{Tr}\left[ {B}^T_{\sigma\left(1\right)} {B}^T_{\sigma\left(2\right)} \cdots  {B}^T_{\sigma\left(n\right)}    {\rho}^T\right],
\end{align}
where $B_j^T=e^{-iH^Tt_j}B^T_{k_j}e^{iH^Tt_j}$.  It is straightforward to show that ${{\mathcal C}}   {B}_{j} ^T {{\mathcal C}}^{-1}=\alpha_j {B}_{j}$ when Eq.~\eqref{eq:C-transform} is satisfied. Inserting ${\mathcal C}^{-1}{\mathcal C}$ between operators in  Eq.~\eqref{eq:W_transposed} and using ${{\mathcal C}}   {B}_{j} ^T {{\mathcal C}}^{-1}=\alpha_j {B}_{j}$, we  get Eq.~\eqref{eq:theorem_C}. The physical correlations $C^{\boldsymbol \eta}_{\sigma,n}$ are linear combinations of the Wightman correlations with the expansion coefficients associated with $W_{\sigma,n}$ and $W_{\tilde{\sigma},n}$ being the same/opposite when  $\prod_{j=1}^n\eta_j=+/-1$, which leads to the selection rule in Eq.~\eqref{eq:theorem_C_selection}.
 \end{proof}

 \begin{table}
    \centering
    \begin{tabular}{c|c|c|c|c|c|c}
    \hline
    \hline
    C-symmetry   & $C^{+-}_{\sigma,{2}} $ & $C^{++}_{\sigma,{2}} $ & $C^{+--}_{\sigma,{3}} $ &  $C^{++-}_{\sigma,{3}} $&  $C^{+-+}_{\sigma,{3}} $ & $C^{+++}_{\sigma,{3}} $  \\
      \hline
    ~~$B_{k_i}^T \rightarrow - B_{k_i}$ ~~ & $0$  &   -   & $0$ & - & - & $0$ \\
      \hline
    ~~$B_{k_i}^T \rightarrow + B_{k_i}$ ~~ &  $0$  &  -   &-  & $0$& $0$& -  \\
      \hline
    \end{tabular}
    \caption{\textbf{The selection rule induced by C-symmetry in second- and third-order CTOCs.} $0$ means the correlation must be zero.}
    \label{tab1}
\end{table}

The relation in Eq.~\eqref{eq:theorem_C} is shown in Fig.~\ref{fig:W_CST} (indicated by the arrows marked by C). Note that the ${\mathcal C}$-transform of the correlations only switches the operators from the {$j$-th} forward/backward branch to {the $\left(n-j+1\right)$-th backward/forward}  one in the time-contour(s) without changing the rank of the OTOCs. The C-symmetry-induced selection rule in second- and third-order CTOCs is shown in Table~\ref{tab1}.

The generalized T-symmetry leads to the following theorem.
\begin{theorem}\label{theorem_T}
    When the system is T-symmetric and the physical quantities of interest are T-symmetric/anti-symmetric, i.e., ${\mathcal T}   { B}_{k_j}^{\ast} {\mathcal T} ^{-1} =\beta_j { B}_{k_j}$ for $\beta_j=+/- 1$, respectively, and the system is initially in a stationary state, that is, ${\rho}\left(0\right)=\rho\left(t<0\right)=\rho\left(-\infty\right)$, the Wightman correlations are related by  
\begin{align}
    W_{\sigma,n} & = W_{\tilde{{\sigma}},n}\left(\{t_j\rightarrow -t_j\}\right) \prod_{j=1}^n\beta_j. \label{eq:theorem_T} 
\end{align}
    where $\{t_j\rightarrow -t_j\}$ means that each $t_j$ is replaced by $-t_j$ in $W_{\tilde{\sigma},n}$. 
\end{theorem}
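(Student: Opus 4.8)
The plan is to follow the template of the proof of Theorem~\ref{theorem_C}, replacing the bare matrix transpose by Hermitian conjugation (transpose composed with complex conjugation), since it is the complex-conjugate Hamiltonian $H^\ast$ — not $H^T$ — that enters the $\mathcal{T}$-transform of Eq.~\eqref{eq:T-transform}. First I would combine $\Tr[A]=\Tr[A^T]$ with the Hermiticity of $H$, $\rho_{\rm B}$ and of the observables $B_{k_j}$ (so that for the interaction-picture operators $B_j\equiv B_{k_j}(t_j)=e^{iHt_j}B_{k_j}e^{-iHt_j}$ one has $B_j^T=B_j^\ast$ and $\rho_{\rm B}^T=\rho_{\rm B}^\ast$) to rewrite
\begin{align}
W_n^\sigma=\Tr\!\left[\rho_{\rm B}^\ast\, B^\ast_{\sigma(1)}\, B^\ast_{\sigma(2)}\cdots B^\ast_{\sigma(n)}\right].
\end{align}

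The crux is then a single-operator identity obtained from Eq.~\eqref{eq:T-transform}. Since $B_j^\ast=e^{-iH^\ast t_j}B_{k_j}^\ast e^{iH^\ast t_j}$ and $\mathcal{T}H^\ast\mathcal{T}^{-1}=H$ implies $\mathcal{T}e^{\pm iH^\ast t_j}\mathcal{T}^{-1}=e^{\pm iHt_j}$, combining with $\mathcal{T}B_{k_j}^\ast\mathcal{T}^{-1}=\beta_j B_{k_j}$ yields
\begin{align}
\mathcal{T}\,B_j^\ast\,\mathcal{T}^{-1}=\beta_j\,e^{-iHt_j}B_{k_j}e^{iHt_j}=\beta_j\,B_{k_j}(-t_j).
\end{align}
The emergence of the reversed time argument is precisely what distinguishes this from the $\mathcal{C}$-case: in $\mathcal{C}H^T\mathcal{C}^{-1}=-H$ the minus sign cancels the one already carried by $e^{-iH^Tt_j}$ and the times are left intact, whereas the plus sign in $\mathcal{T}H^\ast\mathcal{T}^{-1}=+H$ does not cancel it. I would then insert $\mathcal{T}^{-1}\mathcal{T}$ between consecutive factors (legitimate because $\mathcal{T}$ is only assumed invertible, via $\Tr[\mathcal{T}X\mathcal{T}^{-1}]=\Tr[X]$), apply the identity to each $B^\ast_{\sigma(j)}$ and use $\mathcal{T}\rho_{\rm B}^\ast\mathcal{T}^{-1}=\rho_{\rm B}$, factor out $\prod_j\beta_j$ (invariant under the relabeling by $\sigma$), and use cyclicity once more to return $\rho_{\rm B}$ to the right end. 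Reading off Eq.~\eqref{eq:Wightman} with $\tilde\sigma(i)=\sigma(n-i+1)$, the remaining trace $\Tr[B_{k_{\sigma(1)}}(-t_{\sigma(1)})\cdots B_{k_{\sigma(n)}}(-t_{\sigma(n)})\,\rho_{\rm B}]$ is exactly $W_n^{\tilde\sigma}$ with every $t_j$ replaced by $-t_j$, giving Eq.~\eqref{eq:theorem_T}.

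The step I expect to need the most care is the bookkeeping of the time arguments rather than the algebra. After the substitution $t_j\to-t_j$ the times $-t_1\ge\cdots\ge-t_n$ are in descending order, so to view the right-hand side of Eq.~\eqref{eq:theorem_T} as a member of the standard (ascending-time) Wightman family one must re-sort the times and shift their common reference point; the stationarity hypothesis $\rho_{\rm B}(0)=\rho_{\rm B}(-\infty)$ (equivalently $[\rho_{\rm B},H]=0$) is what guarantees that $\rho_{\rm B}$ is left unchanged by that shift, so that the identity truly relates physical correlations of the same system in the same initial state. I would also double-check the two sign-sensitive points — $\overline{e^{iHt}}=e^{-iH^\ast t}$ and the order reversal produced by transposition — to be sure the final permutation comes out as $\tilde\sigma$ and not $\sigma$.
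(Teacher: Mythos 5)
Your proposal is correct and follows essentially the same route as the paper: it starts from the transposed trace of Eq.~\eqref{eq:W_transposed} (your complex-conjugate form is the same thing, since Hermiticity gives $X^T=X^\ast$), uses the key identity $\mathcal{T}B_{k_j}^\ast(t_j)\mathcal{T}^{-1}=\beta_j B_{k_j}(-t_j)$ obtained from $\mathcal{T}e^{\pm iH^\ast t}\mathcal{T}^{-1}=e^{\pm iHt}$, and inserts $\mathcal{T}^{-1}\mathcal{T}$ between factors to read off $W_n^{\tilde\sigma}(\{t_j\to -t_j\})\prod_j\beta_j$. Your closing remark on why stationarity of $\rho_{\rm B}$ is needed to reinterpret the negative-time correlator as a physical Wightman function matches the paper's discussion following the proof.
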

\begin{proof}
    Inserting ${\mathcal T}^{-1}{\mathcal T}$ between operators in Eq.~\eqref{eq:W_transposed} and using ${{\mathcal T}} e^{-iH^T t}{{\mathcal T}}^{-1} ={{\mathcal T}} e^{-iH^* t}{{\mathcal T}}^{-1}= e^{-iHt}$ and hence ${{\mathcal T}} B_{k_j}^T\left(t_j\right){{\mathcal T}}^{-1}={{\mathcal T}} B_{k_j}^{\ast}\left(t_i\right){{\mathcal T}}^{-1}=\beta_j B_{k_j}\left(-t_j\right)$ (for $B_{k_j}$ being Hermitian matrices), we get Eq.~\eqref{eq:theorem_T}.
\end{proof}
 
 According to Eq.~\eqref{eq:theorem_T}, T-symmetry transforms the operators from the {$j$-th} forward/backward branch in a rank-$k$ correlation to the $\left(k-j+1\right)$-th backward/forward one in the contour(s) and simultaneously changes $t_i$ to $-t_i$, as shown in Fig.~\ref{fig:W_CST}(c, d, g, h). After the ${\mathcal T}$-transform, the time where the operators $B_j$ occurs is $-t_j <0$ [see Fig.~\ref{fig:OTOC_TS}(a)]. With the assumption that the system is initially in the stationary state such that ${\rho}\left(0\right)=\rho\left(t<0\right)=\rho\left(-\infty\right)$, we can move the state to a time $t_0<-t_n$ and add an empty forward (backward) branch at the beginning (end) of the time-contour to ensure that the contour always starts forward and ends backward in the initial state ${\rho}$ [see Fig.~\ref{fig:OTOC_TS}(a)]. Thus, we obtain 
 $W_{\tilde{{\sigma}},n}\left(\{t_j\rightarrow -t_j\}\right)\equiv {\rm Tr}\left[B_{\sigma\left(1\right)}\left(-t_{\sigma \left(1\right)}\right)B_{\sigma\left(2\right)}\left(-t_{\sigma\left(2\right)}\right)\cdots B_{\sigma\left(n\right)}\left(-t_{\sigma\left(n\right)}\right)\rho\left(-\infty\right)\right]$.

Interestingly, some types of CTOCs can be related to rank-2 OTOCs under the ${\mathcal T}$-transform. For example, the CTOCs in Fig.~\ref{fig:W_CST}(e, f) are transformed to rank-2 OTOCs in (g, h), respectively. In general, OTOCs of different ranks can be converted as described below.
\begin{enumerate}
\item If a rank-$k$ OTOC contains {a sequence of} operators in the first forward (last backward) branch of time-contour but no operators in the last backward (first forward) branch, it will be converted by time reversal to a rank-$k$ OTOC with {the} sequence of operators in the first forward (last backward) branch {reversely ordered} and {no} operators in the last backward (first forward) branch. Examples are shown in Fig.~\ref{fig:W_CST}(a/b)~$\leftrightarrow$~(c/d) for rank-1~$\stackrel{\rm T}{\Longleftrightarrow}$~rank-1 conversion and in Fig.~\ref{fig:OTOC_TS}(b) for rank-2~$\stackrel{\rm T}{\Longleftrightarrow}$~rank-2.
\item  If a rank-$k$ OTOC contains operators in both the first forward and the last backward branches of time-contour, it will be converted by time reversal to a rank-$\left(k+1\right)$ OTOC with no operators in {either} the first forward or the last backward branch, and vice versa. Examples are shown in Fig.~\ref{fig:W_CST}(e/f)$\leftrightarrow$(g/h) for rank-1~$\stackrel{\rm T}{\Longleftrightarrow}$~rank-2 and in Fig.~\ref{fig:OTOC_TS}(c) for rank-2~$\stackrel{\rm T}{\Longleftrightarrow}$~rank-3.

\end{enumerate}

\begin{figure}
 \centering
 \includegraphics[width=0.95\textwidth]{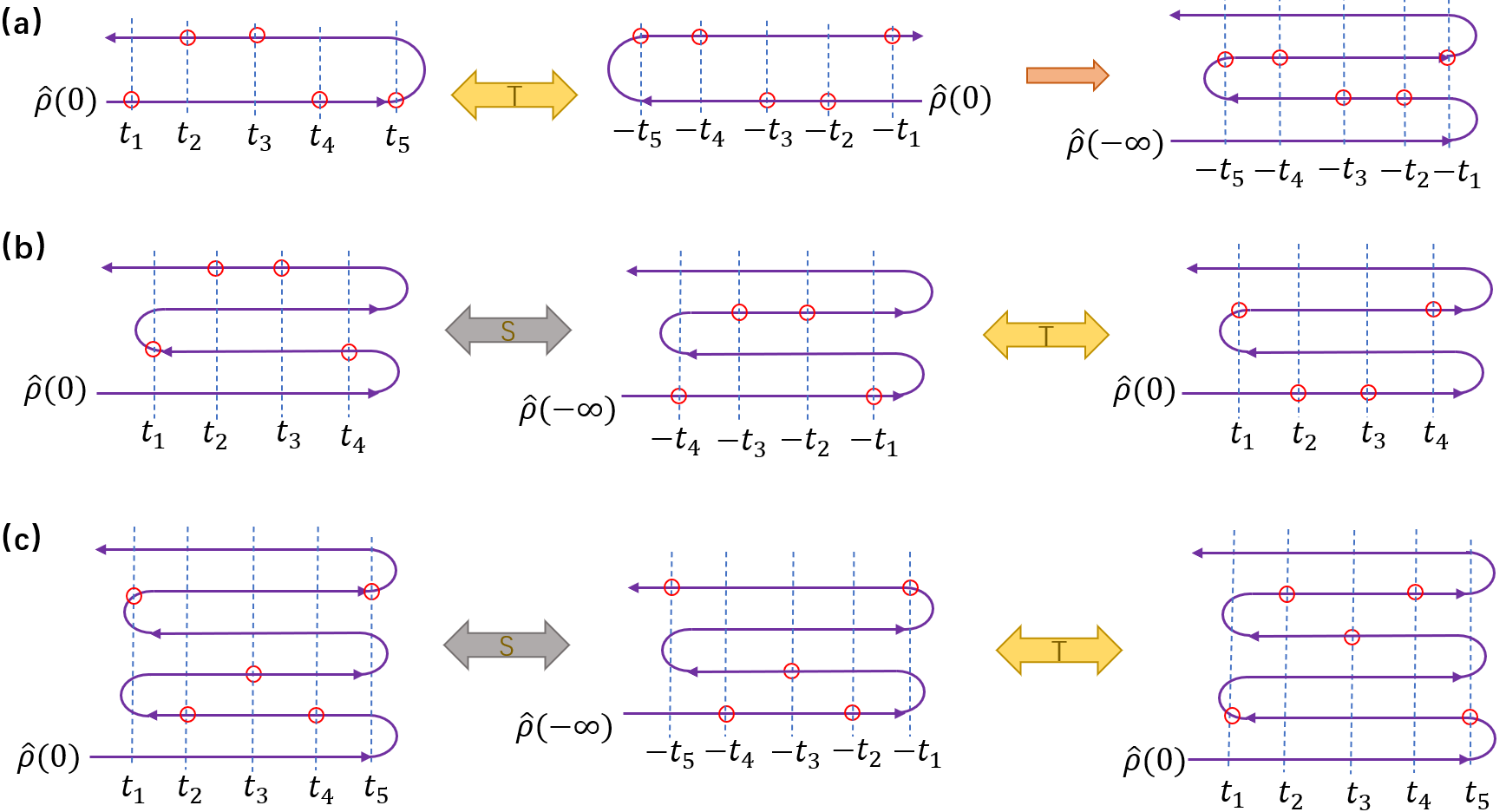}
 \caption{\textbf{OTOCs under ${\mathcal T}$- and ${\mathcal S}$-transforms.} (a) A 5th-order CTOC is transformed to a rank-2 OTOC. When there are operators on both the first forward and the last backward branches of the time-contour, after the ${\mathcal T}$-transform, two empty branches need to be added to ensure the time-contour always starts {forward} from  and ends backward to the system state, which is assumed to be stationary so that ${\rho}\left(0\right)={\rho}\left(-\infty\right)$. (b) Conversion between rank-2 OTOCs by ${\mathcal T}$- and ${\mathcal S}$-transforms. (c) Conversion between rank-2 and rank-3 OTOCs by ${\mathcal T}$- and ${\mathcal S}$-transforms.}
\label{fig:OTOC_TS}
\end{figure}

The connections among high-order quantum correlations arising from generalized S-symmetry are summarized in the following theorem.
\begin{theorem}\label{theorem_S}
    When the system is S-symmetric and the physical quantities of interest are S-symmetric/anti-symmetric, i.e., ${\mathcal S}   { B}_{k_i}  {\mathcal S} ^{-1} =\gamma_i { B}_{k_i}$ for $\gamma_i=+/- 1$, respectively, and the system is initially in a stationary state, that is, $\rho\left(0\right)=\rho\left(t<0\right)=\rho\left(-\infty\right)$, the correlations in the Wightman basis are related by 
    \begin{equation}\label{eq:theorem_S}
    W_{\sigma,n}=W_{\sigma,n}\left(\{t_i\rightarrow-t_i\}\right) \prod_{i=1}^n\gamma_i,
    \end{equation}
    where $\{t_i\rightarrow-t_i\}$ means that each $t_i$ is replaced by $-t_i$ in  $W_{\sigma,n}$.
\end{theorem}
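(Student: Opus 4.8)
The plan is to run the same ``insert the symmetry operator between the factors of the Wightman trace and use cyclicity'' argument as in Theorems~\ref{theorem_C} and~\ref{theorem_T}, but to exploit one simplification: since $H$, $\rho_{\rm B}$ and each $B_{k_i}$ are Hermitian, the composite $\mathcal{S}=\mathcal{C}\mathcal{T}$ acts on them \emph{without} any transpose or complex conjugation --- transpose followed by complex conjugation is Hermitian conjugation, which is the identity on Hermitian matrices. This is precisely what makes the $\mathcal{S}$-relation keep the \emph{same} permutation $\sigma$ on both sides, unlike the order-reversing $\tilde{\sigma}$ produced by $\mathcal{C}$ or $\mathcal{T}$ alone.

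First I would start from $W_n^{\sigma}=\Tr[B_{\sigma(n)}\cdots B_{\sigma(1)}\rho_{\rm B}]$ and, using cyclicity of the trace, rewrite it as $\Tr\big[\mathcal{S}\,(B_{\sigma(n)}\cdots B_{\sigma(1)}\rho_{\rm B})\,\mathcal{S}^{-1}\big]$, then insert $\mathcal{S}^{-1}\mathcal{S}$ between every adjacent pair of factors. The building blocks all come straight from the hypotheses of the theorem: $\mathcal{S}H\mathcal{S}^{-1}=-H$ gives $\mathcal{S}\,e^{iHt}\,\mathcal{S}^{-1}=e^{-iHt}$, so together with $\mathcal{S}B_{k_j}\mathcal{S}^{-1}=\gamma_j B_{k_j}$ and $B_j(t_j)=e^{iHt_j}B_{k_j}e^{-iHt_j}$ one obtains $\mathcal{S}\,B_j(t_j)\,\mathcal{S}^{-1}=\gamma_j\,B_j(-t_j)$, while $\mathcal{S}\rho_{\rm B}\mathcal{S}^{-1}=\rho_{\rm B}$ leaves the state fixed. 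Pulling out the $n$ signs (whose product is reordering-invariant, $\prod_j\gamma_{\sigma(j)}=\prod_j\gamma_j$) then yields $W_n^{\sigma}=\big(\prod_{j=1}^n\gamma_j\big)\,\Tr\big[B_{\sigma(n)}(-t_{\sigma(n)})\cdots B_{\sigma(1)}(-t_{\sigma(1)})\,\rho_{\rm B}\big]$, which already displays the announced structure.

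The only remaining step is to recognize the trace on the right as $W_n^{\sigma}(\{t_i\to-t_i\})$ in the paper's sense, namely with the initial state placed at $-\infty$. Here I would invoke the stationarity hypothesis $\rho_{\rm B}(0)=\rho_{\rm B}(t<0)=\rho_{\rm B}(-\infty)$ exactly as in the discussion following Theorem~\ref{theorem_T}: all operators now sit at times $-t_j$, so one prepends an empty forward branch and appends an empty backward branch and slides the state to a reference time $t_0<-t_n$, converting $\Tr[\cdots\rho_{\rm B}(0)]$ into the standard Wightman correlation evaluated at the negated arguments. I would also remark that negating all times reverses the ordering convention $t_1\le\cdots\le t_n$, but that is just a relabeling of the (arbitrary) time indices, absorbed by the matching relabeling inside $\sigma$, so the permutation label on the right-hand side stays $\sigma$.

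I expect the only genuinely delicate point to be this time-reversal/stationarity bookkeeping: making precise what ``$W_n^{\sigma}$ evaluated at $-t_i$'' means once one leaves the ordering convention, and justifying that the initial state may be slid back in time precisely because it is stationary, so that no physical time-reversed evolution of $H$ is needed. The conjugation identities for $e^{iHt}$ and $B_{k_j}$ and the cyclicity step are routine. As a cross-check, Theorem~\ref{theorem_S} can also be obtained by composing Theorems~\ref{theorem_C} and~\ref{theorem_T} --- applying the $\mathcal{T}$-relation and then the $\mathcal{C}$-relation to $W_n^{\tilde{\sigma}}(\{t_j\to-t_j\})$ and using $\tilde{\tilde{\sigma}}=\sigma$ restores $\sigma$, and the signs multiply, $\gamma_i=\alpha_i\beta_i$, which one verifies from the Hermiticity of $B_{k_i}$ --- but I would keep the direct argument as the main proof since it avoids having to express $\mathcal{S}$ in terms of $\mathcal{C}$ and $\mathcal{T}$.
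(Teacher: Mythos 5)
Your proposal is correct and follows essentially the same route as the paper's proof: insert $\mathcal{S}^{-1}\mathcal{S}$ between the factors of the Wightman trace, use $\mathcal{S}B_{k_j}(t_j)\mathcal{S}^{-1}=\gamma_j B_{k_j}(-t_j)$ together with $\mathcal{S}\rho_{\rm B}\mathcal{S}^{-1}=\rho_{\rm B}$, and invoke stationarity to reinterpret the resulting trace at negated times. Your additional remarks (why no transpose or conjugation is needed for $\mathcal{S}$, and the cross-check by composing the $\mathcal{C}$- and $\mathcal{T}$-relations) are consistent elaborations of the same argument.
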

\begin{proof}
Inserting ${\mathcal S}^{-1}{\mathcal S}$ between operators in Eq.~\eqref{eq:Wightman} and using ${\mathcal S}B_{k_j}\left(t_j\right){\mathcal S}^{-1}=\gamma_j B_{k_j}\left(-t_j\right)$, we get Eq.~\eqref{eq:theorem_S}.
\end{proof}

Similarly to the ${\mathcal T}$-transform, when the system is initially in a stationary state, the ${\mathcal S}$-transform also relates certain rank-$k$ OTOCs to rank-$k$, rank-$\left(k-1\right)$, or rank-$\left(k+1\right)$ OTOCs, depending on whether or not there are operators in the first forward and/or the last backward branch of the time-contour. Examples of conversion between OTCOS of different ranks are shown in Fig.~\ref{fig:W_CST}(a/b)~${\leftrightarrow}$~(d/c) for rank-1 $\stackrel{\rm S}{\Longleftrightarrow}$ rank-1, Fig.~\ref{fig:W_CST}(e/f)~${\leftrightarrow}$~(h/g) for rank-1~$\stackrel{\rm S}{\Longleftrightarrow}$~rank-2, Fig.~\ref{fig:OTOC_TS}(b) for rank-2~$\stackrel{\rm S}{\Longleftrightarrow}$~rank-2 conversion, and  Fig.~\ref{fig:OTOC_TS}(c) for rank-3~$\stackrel{\rm S}{\Longleftrightarrow}$~rank-2.

\begin{figure}
 \centering
 \includegraphics[width=0.45\textwidth]{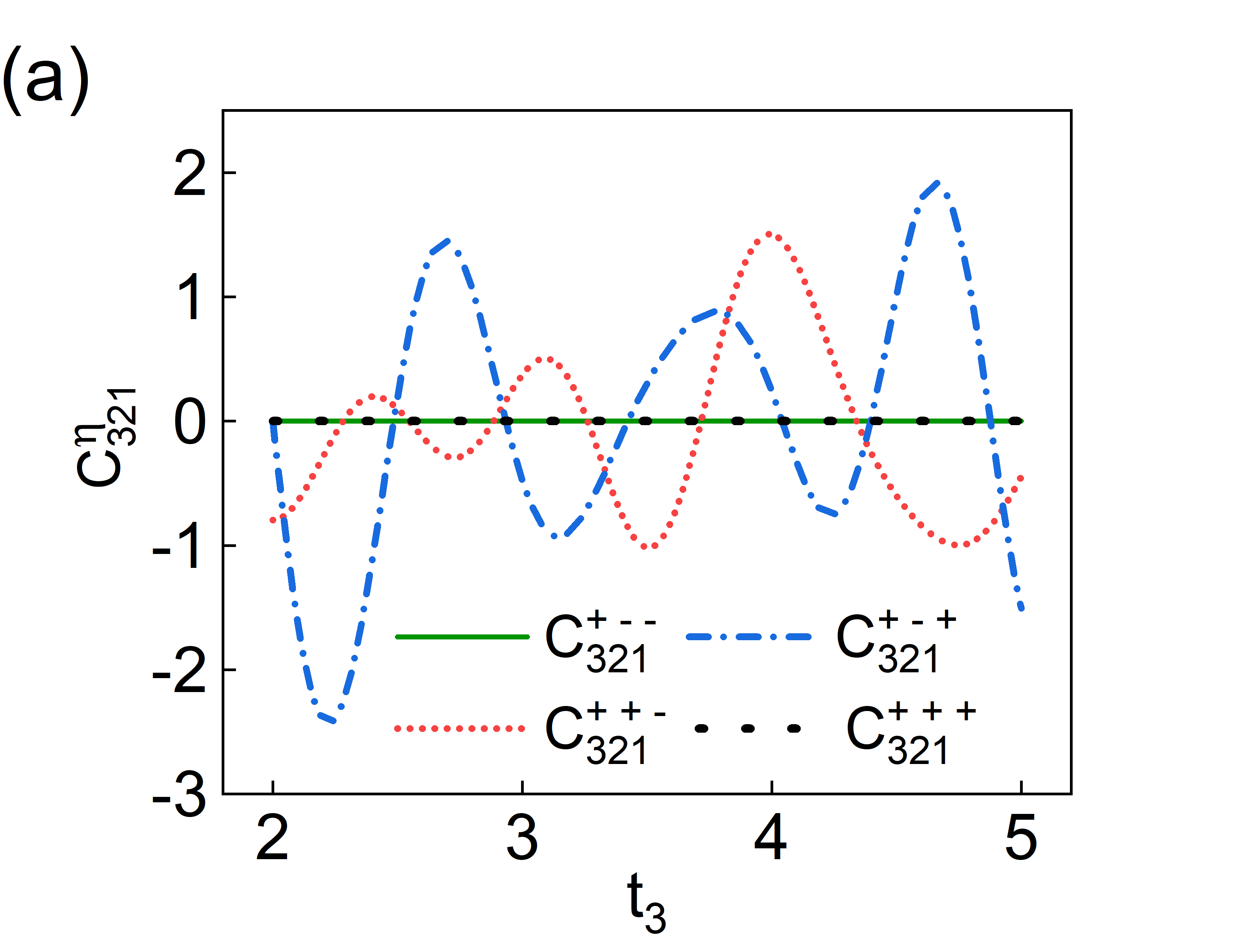}
 \includegraphics[width=0.45\textwidth]{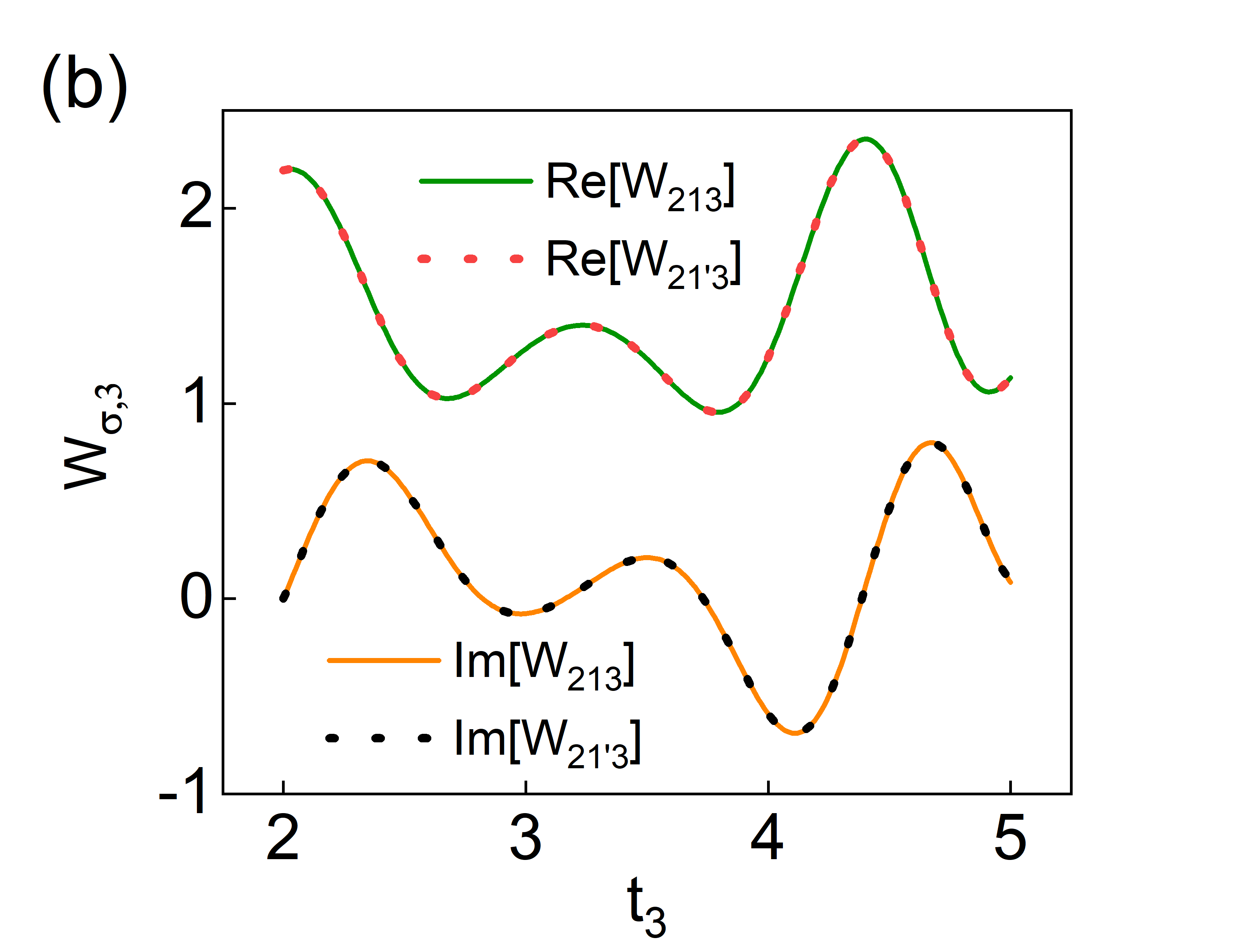}
 \caption{\textbf{Third-order correlations in transverse-field Ising model.}(a) Third-order CTOCs $C^{\boldsymbol \eta}_{321}$ of C-anti-symmetric observables as functions of $t_3$. (b) Real and imaginary parts of Wightman correlations $W_{213}$ and $W_{21^{\prime}3}$ of T-symmetric observables as functions of $t_3$ (with $t^{\prime}_1\equiv t_3+t_2-t_1$). In the calculation, the periodic boundary condition is chosen and parameters are such that $\lambda=1.5$, $\beta=1$, $N=8$, $t_1=0$ and $t_2=2$.} 
\label{fig2}
\end{figure} 

We consider the correlations in the TFIM~\cite{mattis2006theory}
to demonstrate the above theorems. The Hamiltonian {reads}
\begin{equation}
{H} = -\sum^{N}_{j=1} {\sigma}_j^z - \lambda \sum^{N}_{j=1} {\sigma}_j^x {\sigma}_{j+1}^x,
\label{eq:TFIM}
\end{equation}
where ${\sigma}^{x/y/z}_j$ are the Pauli matrices of the $j$-th spin along the $x/y/z$ axes with the basis chosen such that ${\sigma}^z_{j}$ is diagonal and ${\sigma}^x_{j}$ is real.  With the definitions ${\mathcal C} \equiv \prod^{N/2}_{l=1} \left( {\sigma}^x_{2l-1} {\sigma}^y_{2l} \right)$
and ${\mathcal T} \equiv \prod^{N}_{l=1} {\sigma}^z_{l}$, the Hamiltonian has both C- and T-symmetry.
We focus on third-order correlations. To demonstrate the effects of C-symmetry, we consider a C-symmetric initial state ${\rho} = \prod^{N/2}_{l=1} \left[  \left(\sigma_{2l-1}^x+I_{2l-1}\right) {I}_{2l}/4\right]$ and choose a C-anti-symmetric observable ${B} = \frac{1}{\sqrt{N}} \sum_j {\sigma}^z_j$.  Numerical results [Fig.~\ref{fig2}(a)] are consistent with the selection rules for the third-order CTOCs in Table~\ref{tab1}, 
with all terms but $C^{+-+}_{321}$ and $C^{++-}_{321}$ being zero (hereafter we denote a permutation $\sigma$ using the resultant sequence after the permutation, such as $321$). To show the effects of T-symmetry, we consider the equilibrium state $\rho\left(0\right)= \exp \left(-\beta H\right)/Z=\rho\left({-\infty}\right)$, where $Z={\rm Tr}\exp{\left(-\beta {H}\right)}$ with $\beta$ being the inverse temperature. Both the state $\rho$ and the observable $B$ are T-symmetric. As shown in Fig.~\ref{fig2}(b), the rank-2 OTOC $W_{213}\equiv {\rm Tr}\left[{B}\left(t_2\right){B}\left(t_1\right){B}\left(t_3\right)\rho\left(0\right)\right]$ and the CTOC  $W_{21^{\prime}3}\equiv {\rm Tr}\left[{B}\left(-t_3\right){B}\left(-t_1\right){B}\left(-t_2\right)\rho\left(-\infty\right)\right]=
 {\rm Tr}\left[{B}\left(t_2\right){B}\left(t^{\prime}_1\right){B}\left(t_3\right)\rho\left(0\right)\right]$ (where $t^{\prime}_1\equiv t_3+t_2-t_1>t_3$ and time translational invariance of the system has been used) are identical due to T-symmetry of the system, which is consistent with Theorem~\ref{theorem_T}.

\subsection*{Generalized FDT for high-order correlations}

For time-translation symmetric systems in thermal equilibrium, the FDT has been established between the second-order correlations $C^{++}_{\sigma,2}$ and $C^{+-}_{\sigma,2}$. The FDT-like relations for higher order correlations have been studied only for classical systems~\cite{ PhysRevLett.87.040601, PhysRevA.86.043818} and for a special class of OTOCs in quantum systems~\cite{PhysRevE.97.012101}. Notwithstanding the importance of high-order correlations in quantum many-body physics, the formulation of FDT-like relations between them remains an open problem~\cite{RevModPhys.88.045008}.

For time-translation symmetric systems, we define the spectra of the Wightman correlations and the physical ones as
\begin{subequations}
\label{eq:spectra}
\begin{align}
  \tilde{W}_{\sigma,n}\left(\boldsymbol{\omega}\right) & \equiv \int^{+\infty}_{-\infty}  W_{\sigma,n} \prod^{n-1}_{j=1} e^{i \omega_j \tau_j} d\tau_j ,
  \label{eq:spectra_W}\\
     \tilde{C}^{{\boldsymbol \eta}}_{\sigma,n}\left(\boldsymbol{\omega}\right) & \equiv \int^{+\infty}_{-\infty} C^{{\boldsymbol \eta}}_{\sigma, n} \prod^{n-1}_{j=1} e^{i \omega_j \tau_j} d\tau_j, 
     \label{eq:spectra_C}
\end{align}
\end{subequations}
where $\boldsymbol{\omega}\equiv \left(\omega_{n-1},\ldots,\omega_2,\omega_1\right)$ and $\tau_j\equiv t_{j+1}-t_j$. In the following, we assume that the system is in a thermal equilibrium state  $\hat{\rho}=Z^{-1}e^{-\beta \hat{H}}$ with $Z\equiv {\rm Tr}e^{-\beta\hat{H}}$. The relations for systems in a general stationary state are established in Methods~\ref{App_sec:gFDTn}.

The generalized FDT for high-order correlation spectra is summarized in the following theorem. 
\begin{theorem}\label{theorem_TTI}
    When the system is time-translation invariant (i.e., its Hamiltonian is time-independent) and initially in a thermal equilibrium state, the correlation spectra of $n$-th order Wightman correlations satisfy the following cyclic relation (for $n>1$)
\begin{subequations}
\begin{equation}
\label{eq:theorem_TTI}
\tilde{W}_{\sigma\left(n\right),\ldots,\sigma\left(2\right),
\sigma\left(1\right)}\left(\boldsymbol{\omega}\right)=e^{\beta \omega_{\sigma\left(1\right)}-\beta \omega_{\sigma\left(1\right)-1}}
\tilde{W}_{\sigma\left(1\right),\sigma\left(n\right),\ldots,\sigma\left(2\right)}\left({\boldsymbol{\omega}}\right),
\end{equation}
with $\omega_0\equiv \omega_n\equiv 0$, and the spectra of the high-order physical correlations satisfy the following relation
    \begin{equation}\label{eq:FDTn1} 
        \tilde{C}^{+\eta_{n-1}\cdots\eta_{2}+}_{\sigma,n}\left({\boldsymbol{\omega}}\right)=\frac{i}{2} \coth \frac{\beta \omega_{\sigma(1)}-\beta \omega_{\sigma(1)-1}}{2} \tilde{C}^{+\eta_{n-1}\cdots\eta_{2}-}_{\sigma,n} \left({\boldsymbol{\omega}}\right).
    \end{equation}
  \end{subequations}
\end{theorem}
\begin{proof}
For a system in a thermal equilibrium state, we have
\begin{align}
\tilde{W}_{\sigma\left(n\right),\ldots,\sigma\left(2\right),\sigma\left(1\right)}\left({\boldsymbol{\omega}}\right)=&\int^{+\infty}_{-\infty}  \operatorname{Tr}\left[\hat{B}_{\sigma\left(n\right)}  \cdots \hat{B}_{\sigma\left(2\right)} e^{-\beta \hat{H}} e^{\beta \hat{H}}\hat{B}_{\sigma\left(1\right)}e^{-\beta \hat{H}}/Z\right] \prod^{n-1}_{j=1}e^{i \omega_j \tau_j} d\tau_j \notag \\
    &= \int^{+\infty}_{-\infty}  \operatorname{Tr}\left[\hat{B}_{\sigma\left(1\right)}\left(t_{\sigma\left(1\right)}-i\beta \right)\hat{B}_{\sigma\left(n\right)}  \cdots \hat{B}_{\sigma\left(2\right)} e^{-\beta \hat{H}}/Z \right] \prod^{n-1}_{j=1} e^{i \omega_j \tau_j} d\tau_j \notag \\
    &=e^{\beta \omega_{\sigma\left(1\right)}-\beta\omega_{{\sigma\left(1\right)}-1}}\tilde{W}_{\sigma\left(1\right),\sigma\left(n\right),\ldots,\sigma\left(2\right)}(\boldsymbol{\omega}),\notag
\end{align}
where $e^{\beta \hat{H}}\hat{B}_{{\sigma\left(1\right)}}\left(t_{\sigma\left(1\right)} \right)e^{-\beta \hat{H}}=\hat{B}_{{\sigma\left(1\right)}}\left(t_{\sigma\left(1\right)}-i\beta \right)$ has been used and to obtain the last equation the integration variables $\tau_{\sigma\left(1\right)}$ and $\tau_{{\sigma\left(1\right)}-1}$ have been changed to be $\tau_{\sigma\left(1\right)}'\equiv t_{{\sigma\left(1\right)}+1}-\left(t_{\sigma\left(1\right)}-i\beta\right)$ and $\tau_{{\sigma\left(1\right)}-1}'\equiv \left(t_{\sigma\left(1\right)}-i\beta\right)-t_{{\sigma\left(1\right)}-1}$, respectively, with $t_0\equiv 0$, which leads to the factor $e^{\beta \omega_{\sigma\left(1\right)}-\beta\omega_{{\sigma\left(1\right)}-1}}$ through $e^{i\omega_{\sigma\left(1\right)} \tau_{\sigma\left(1\right)}}=e^{i\omega_{\sigma\left(1\right)} \tau'_1}e^{\beta \omega_{\sigma\left(1\right)}}$ and $e^{i\omega_{{\sigma\left(1\right)}-1} \tau_{{\sigma\left(1\right)}-1}}=e^{i\omega_{{\sigma\left(1\right)}-1} \tau'_{{\sigma\left(1\right)}-1}}e^{-\beta \omega_{{\sigma\left(1\right)}-1}}$~\cite{kubo1966fluctuation}.
Using 
\begin{align}
2\tilde{C}^{+\eta_{n-1}\cdots\eta_{2}+}_{\sigma,n} & ={\rm Tr}\left[{\mathbb B}_{\sigma\left(n\right)}^+{\mathbb B}_{\sigma\left(n-1\right)}^{\eta_{n-1}}\cdots{\mathbb B}_{\sigma\left(2\right)}^{\eta_2}\hat{B}_{\sigma\left(1\right)}\hat{\rho}\right]+{\rm Tr}\left[\hat{B}_{\sigma\left(1\right)}{\mathbb B}_{\sigma\left(n\right)}^+{\mathbb B}_{\sigma\left(n-1\right)}^{\eta_{n-1}}\cdots{\mathbb B}_{\sigma\left(2\right)}^{\eta_2}\hat{\rho}\right],
\notag \\
i\tilde{C}^{+\eta_{n-1}\cdots\eta_{2}-}_{\sigma,n} & ={\rm Tr}\left[{\mathbb B}_{\sigma\left(n\right)}^+{\mathbb B}_{\sigma\left(n-1\right)}^{\eta_{n-1}}\cdots{\mathbb B}_{\sigma\left(2\right)}^{\eta_2}\hat{B}_{\sigma\left(1\right)}\hat{\rho}\right]- {\rm Tr}\left[\hat{B}_{\sigma\left(1\right)}{\mathbb B}_{\sigma\left(n\right)}^+{\mathbb B}_{\sigma\left(n-1\right)}^{\eta_{n-1}}\cdots{\mathbb B}_{\sigma\left(2\right)}^{\eta_2}\hat{\rho}\right],
\notag 
\end{align} 
and
Eq.~\eqref{eq:theorem_TTI}, we obtain Eq.~\eqref{eq:FDTn1}. 
\end{proof}

 The generalized FDT establishes connections between different types of correlation spectra, which can be used to simplify the measurement of correlation spectra. Repeatedly using Eq.~\eqref{eq:theorem_TTI}, we can show that among all Wightman correlations related by cyclic permutations only one is independent. So there are at most $(n-1)!$ independent $n$-th order Wightman correlations for time-translation symmetric systems. Furthermore, OTOCs are involved in Eq.~\eqref{eq:spectra} when $n>2$, so the generalized FDT enables us to obtain the spectra of OTOCs from the spectra of CTOCs (See Methods~\ref{OfC}). It can be shown that Eq.~(\ref{eq:FDTn1}) reduces to the high-order
fluctuation–dissipation relations derived in Ref.~\cite{PhysRevA.86.043818} for classical systems and the FDT for a special class of OTOCs established in~\cite{PhysRevE.97.012101} for quantum systems constitutes a special case of our results. For second-order correlations, the generalized FDT is reduced to the well-known Kubo formula~\cite{kubo1966fluctuation}. 

The generalized FDT beyond the second order cannot be directly observed in QNS due to the involvement of OTOCs in the definition of the correlation spectra. However, the relations between OTOCs and CTOCs induced by T- or S-symmetry guarantee that all third-order OTOCs in T- or S-symmetric systems  can be extracted from QNS. Therefore, the generalized FDT for third-order correlations can be established for CTOCs when the system has T- or S-symmetry (See Methods~\ref{T FDT} for details).

In the third order, we have the relations $\tilde{W}_{321}=e^{\beta \omega_1}\tilde{W}_{132}=e^{\beta \omega_2}\tilde{W}_{213}$ and $\tilde{W}_{123}=e^{-\beta \omega_2}\tilde{W}_{312}=e^{-\beta \omega_1}\tilde{W}_{231}$, and the generalized FDT for the physical third-order correlations as
\begin{align}
    \tilde{C} ^{+\eta+}_{\sigma,3}\left(\boldsymbol{\omega}\right)=&{i}{2} \coth \frac{\beta \omega_{\sigma(1)}-\beta \omega_{\sigma(1)-1}}{2}  \tilde{C} ^{+\eta -}_{\sigma,3}\left(\boldsymbol{\omega}\right), \label{eq:FDT31} 
\end{align}
which is the straightforward generalization of the second-order FDT. In Methods~\ref{spec of TFIM}, we calculate the correlation spectra for the example TFIM and show that they satify the generalized FDT.

\section*{Conclusion}

We have discovered structures in high-order quantum correlations arising from non-spatial symmetries including the generalized C-, T-, and S-symmetries and time-translational symmetry. The generalized C-symmetry induces selection rules for the QNS of the higher-order correlations, and the generalized T- and S-symmetries connect some types of rank-$k$ OTOCs to rank-$(k\pm 1)$ OTOCS, and in particular, rank-2 OTCOs to CTOCs.  The connections between different ranks of OTOCs make it possible to measure certain types (but not all) of rank-2 OTOCs using QNS without the requirement of (unphysical) time-arrow reversal. They also reduce some higher-rank OTOCs to lower ones. We have also established the generalized FDT between high-order quantum correlations for time translation invariant systems. These generalized relations reveal profound connections among different types of quantum correlations, and enable us to obtain the spectra of certain types and ranks of OTOCs from those of CTOCs. Since the exponential growth of OTOCs signifies information scrambling in quantum many-body systems~\cite{hosur2016chaos, PhysRevX.8.021013, PhysRevLett.126.030601}, the connection between CTOCs and OTOCs suggests that information scrambling in quantum many-body systems is largely constrained by T- or S-symmetry, or CTOCs can exhibit diverging behaviors in T- or S-symmetric systems, which is worth further exploration.

\newpage
\clearpage

\beginmethod
\section*{Methods}

\pagestyle{plain}

\subsection{Generalized relations for high-order quantum correlations due to time translation invariance.}\label{App_sec:gFDTn}
For a general stationary system, the state of the system satisfies $ i\partial_t \hat{\rho}(t) =\left[\hat{H},\hat{\rho}(t)\right]=0$, so the initial state must be a function of the Hamiltonian, $\rho\left(\hat{H}\right)$. With the Laplacian transformation, the state can be rewritten as $\rho\left(\hat{H}\right)=\int_{0}^{+ \infty}e^{-\beta \hat{H}}f\left(\beta\right) d\beta$. Then the spectra of the correlations satisfy
\begin{align}\label{eq:gTT1}
&\left.\tilde{W}_{\sigma\left(n\right),\ldots,\sigma\left(2\right),\sigma\left(1\right)}\left(\boldsymbol{\omega}\right)\right|_{{\rho}\left(\hat{H}\right)}  \equiv \int^{+\infty}_{-\infty}  \operatorname{Tr}\left[\hat{B}_{\sigma\left(n\right)}  \cdots \hat{B}_{\sigma\left(2\right)} \hat{B}_{\sigma \left(1\right)}{\rho}\left(\hat{H}\right)\right] \prod^{n-1}_{j=1}d\tau_j e^{i \omega_j \tau_j} \nonumber \\ 
&\phantom{\tilde{W}_{\sigma\left(n\right),\sigma\left(1\right)}}= \int_0^{+\infty}f\left(\beta\right)d\beta \int^{+\infty}_{-\infty}   \operatorname{Tr}\left[ \hat{B}_{\sigma\left(n\right)}  \cdots \hat{B}_{\sigma\left(2\right)} \hat{B}_{\sigma \left(1\right)} e^{-\beta\hat{H}}\right] \prod^{n-1}_{j=1}d\tau_j e^{i \omega_j \tau_j} \nonumber \\ 
&\phantom{\tilde{W}_{\sigma\left(n\right),\sigma\left(1\right)}}= \int_0^{+\infty}f\left(\beta\right)d\beta \int^{+\infty}_{-\infty}  e^{\beta\left(\omega_{\sigma\left(1\right)}-\omega_{{\sigma\left(1\right)}-1}\right)} \operatorname{Tr}\left[\hat{B}_{\sigma \left(1\right)} \hat{B}_{\sigma\left(n\right)}  \cdots \hat{B}_{\sigma\left(2\right)} e^{-\beta\hat{H}}\right] \prod^{n-1}_{j=1}d\tau_j e^{i \omega_j \tau_j} \nonumber \\ 
 & =  \left.\tilde{W}_{\sigma\left(1\right),\sigma\left(n\right),\ldots,\sigma\left(2\right)}\left(\boldsymbol{\omega}\right)\right|_{{\rho}\left(\hat{H}-\omega_{\sigma\left(1\right)}+\omega_{\sigma\left(1\right)-1}\right)}.
\end{align}
Note that the frequency-shifted density operator
$$   \rho\left(\hat{H}-\omega_{\sigma\left(1\right)}+\omega_{{\sigma\left(1\right)}-1}\right) \equiv  \int_{0}^{+ \infty}e^{-\beta \left(\hat{H}-\omega_{\sigma\left(1\right)}+\omega_{{\sigma\left(1\right)}-1}\right)}f\left(\beta\right) d \beta $$
is not necessarily normalized.

\subsection{Obtaining the spectra of OTOCs from those of CTOCs}\label{OfC}

As an example, we take the relation in Eq.~\eqref{eq:theorem_TTI} to obtain the spectra of third-order OTOCs from those of CTOCs.  
 The Wightman correlation $W_{abc}$ is an OTOC when $t_b<t_a$ and $t_c$.  We define the OTOC and CTOC spectra of $W_{abc}$ as
\begin{subequations}
\begin{align}
    \tilde{W}_{abc}^{\rm OTOC} \left(\boldsymbol{\omega}\right) & \equiv\iint_{t_b< t_a, t_c}  e^{i \omega_1 \tau_1} e^{i \omega_2 \tau_2} W_{abc} d\tau_1 d\tau_2, \label{eq:OTOC_Wabc}\\
    \tilde{W}_{abc}^{\rm CTOC} \left(\boldsymbol{\omega}\right) & \equiv \tilde{W}_{abc}\left(\boldsymbol{\omega}\right)-\tilde{W}_{abc}^{\rm OTOC}\left(\boldsymbol{\omega}\right).
\end{align}
\end{subequations}
Using $\tilde{W}_{abc}\left(\boldsymbol{\omega}\right)=e^{\beta \left(\omega_c-\omega_{c-1}\right)}\tilde{W}_{cab}\left(\boldsymbol{\omega}\right)$, we obtain
\begin{align}\label{T-C TOC}
    \tilde{W}_{abc}^{\rm OTOC}\left(\boldsymbol{\omega}\right)-e^{\beta \left(\omega_c-\omega_{c-1}\right)} \tilde{W}_{cab}^{\rm OTOC}\left(\boldsymbol{\omega}\right)=e^{\beta \left(\omega_c-\omega_{c-1}\right)}\tilde{W}_{cab}^{\rm CTOC}-\tilde{W}_{abc}^{\rm CTOC}.
\end{align}
The definition of OTOC and CTOC spectra can be generalized to higher orders, and some (but not all) of the OTOC spectra can be obtained using the CTOC ones.

\subsection{ FDT for third-order CTOCs in  T- or S-symmetric systems}\label{T FDT}
In T- or S-symmetric symmetric systems, all third-order OTOCs  can be obtained from CTOCs. Therefore, we can establish a FDT for third-order CTOCs in T- or S-symmetric systems.

 When the system is T-symmetric, we can obtain all OTOCs from CTOCs through
$    W_{abc}  =\beta_a\beta_b\beta_c  W_{cba}\left(\{t_j\rightarrow -t_j\}\right)$ as given by Theorem~\ref{theorem_T},
where $t_b<t_a,t_c$.
Then the OTOC spectra can be obtained from the CTOCs for T-symmetric systems
\begin{align}
    \tilde{W}_{abc}^{\rm OTOC}\left({\boldsymbol\omega}\right) & = \beta_a\beta_b\beta_c \iint_{t_b> t_a,t_c} d\tau_2  d\tau_1  e^{-i \omega_1 \tau_1} e^{-i \omega_2 \tau_2} W_{cba},\nonumber
\end{align}
where the r.h.s. of the equation involves only the CTOC $W_{123}$ with $t_2>t_1,t_3$. Similarly, for S-symmetric systems, using $    W_{abc}  =\beta_a\beta_b\beta_c  W_{abc}\left(\{t_j\rightarrow -t_j\}\right)$ as given by Theorem~\ref{theorem_S} and Eq.~\eqref{eq:OTOC_Wabc}, we get
\begin{align}
    \tilde{W}_{abc}^{\rm OTOC}\left({\boldsymbol\omega}\right) & = \gamma_a\gamma_b\gamma_c \iint_{t_b > t_a,t_c} d\tau_2  d\tau_1  e^{-i \omega_1 \tau_1} e^{-i \omega_2 \tau_2} W_{abc}.\nonumber
\end{align}
Therefore, the spectra on both sides of Eq. (\ref{T-C TOC}) can be obtained from CTOCs and the FDT for third-order CTOCs can be established.

\subsection{Third-order correlations of TFIM}\label{spec of TFIM}
Using the Jordan-Wigner transformation, we map the TFIM in Eq.~\eqref{eq:TFIM} to a free fermion model with the Hamiltonian  $\hat{H}=\sum_k \varepsilon_k\left(\hat{b}_k^{\dagger} \hat{b}_k-1 / 2\right)$, where $\varepsilon_k=2 \sqrt{1-2 \lambda \cos k+\lambda^2}$, and $\hat{b}^{\dagger}_k$ and $\hat{b}_k$ are annihilation and creation operators of fermions with wavevector $k$~\cite{mattis2006theory}.
The measurement operator can be rewritten as~\cite{chen2013quantum}
$$ \hat{B}=\frac{2}{\sqrt{N}}\sum_{k \geq 0} \left[\cos \left(2 \theta_k\right)\left(\hat{b}_k^{\dagger} \hat{b}_k+\hat{b}_{-k}^{\dagger} \hat{b}_{-k}-1\right)\notag - \mathrm{i} \sin \left(2 \theta_k\right)\left(\hat{b}_{-k} \hat{b}_k+\hat{b}_{-k}^{\dagger} \hat{b}_k^{\dagger}\right)\right],
$$ 
where $\theta_k$ is defined with $\tan \left(2 \theta_k\right)=\sin k/ \left(\cos k - \lambda\right)$. For a thermal equilibrium state, we obtain the third-order correlations as ${C}^{++-}_{321}=0$ and
\begin{subequations}
\begin{align}
{C}^{+--}_{321}=& -\frac{2^5}{{N}^{3/2}} \sum_{k \geq 0} \nu_k \lambda_k^- 
\left\{\cos\left(2\varepsilon_k \tau_1\right)-\cos\left[2\varepsilon_k\left(\tau_1+\tau_2\right)\right]\right\}, \\
{C}^{+-+}_{321}=& \frac{2^4} {{N}^{3/2}}\sum_{k>0}\nu_k \lambda_k^+\left\{\sin\left(2\varepsilon_k\tau_1\right)+\sin\left(2\varepsilon_k\tau_2\right)-\sin\left[2\varepsilon_k\left(\tau_1+\tau_2 \right)\right]\right\}, \\
{C}^{+++}_{321}=& \frac{2^3}{{N}^{3/2}}\sum_{k>0}\lambda_k^-\left[-\cos^3(2\theta_k)-\nu_k\cos\left(2\varepsilon_k\tau_2\right)\right],
\end{align}
\end{subequations}
where $\nu_k= \cos\left(2\theta_k\right)\sin^2\left(2 \theta_k\right)$, and $\lambda^{\pm}_k=\frac{1\pm e^{-2\beta \varepsilon_k}}{\left(1+e^{-\beta \varepsilon_k}\right)^2}$.
Their spectra are $\tilde{C}^{+--}_{321}\left(\boldsymbol{\omega}\right)=0$ and 
\begin{subequations}
\begin{align}
    \tilde{C}^{+--}_{321}\left(\boldsymbol{\omega}\right)=&-\frac{2^6  \pi ^2}{{N}^{3/2}} \sum_{k \geq 0} \nu_k \lambda_k^- \left\{ \left[\delta\left(
    \omega_1-2\epsilon_k\right)+\delta\left(
    \omega_1+2\epsilon_k\right)\right]\delta\left(
    \omega_2\right) \right.\notag \\
    &\left.+\delta\left(
    \omega_2+2\epsilon_k\right)\delta\left(
    \omega_1+2\epsilon_k\right)+\delta\left(
    \omega_2-2\epsilon_k\right)\delta\left(
    \omega_1-2\epsilon_k\right) \right\} ,        \\
    \tilde{C}^{+-+}_{321}\left(\boldsymbol{\omega}\right)=&\frac{i 2^5 \pi^2}{{N}^{3/2}}\sum_{k>0}\nu_k \lambda_k^+  \left\{ \left[\delta\left(
    \omega_1-2\epsilon_k\right)-\delta\left(
    \omega_1+2\epsilon_k\right)\right]\delta\left(
    \omega_2\right)\right.\notag \\
    &+ \left[\delta\left(
    \omega_2-2\epsilon_k\right)-\delta\left(
    \omega_2+2\epsilon_k\right)\right]\delta\left(
    \omega_1\right)\notag  \\
    &\left.+\delta\left(
    \omega_2+2\epsilon_k\right)\delta\left(
    \omega_1+2\epsilon_k\right)-\delta\left(
    \omega_2-2\epsilon_k\right)\delta\left(
    \omega_1-2\epsilon_k\right) \right\}, \\
    \tilde{C}^{+++}_{321}\left(\boldsymbol{\omega}\right)=& \frac{2^4 \pi ^2}{{N}^{3/2}}\sum_{k>0} \lambda_k^-\left\{-\cos^3\left(2\theta_k\right) \delta\left(\omega_1\right)\delta\left(\omega_2\right)\right.\notag \\
    & \left.-\nu_k  \delta\left(\omega_1\right)\left[\delta\left(\omega_2-2 \epsilon_k\right)+\delta\left(\omega_2+2 \epsilon_k\right)\right]\right\}. \end{align}
\end{subequations}
 We find that these correlation spectra satisfy
\begin{align}
    \tilde{C} ^{+- +}_{321}\left(\boldsymbol{\omega}\right)=&\frac{i}{2}\coth\frac{\beta \omega_1}{2} \tilde{C} ^{+-  -}_{321}\left(\boldsymbol{\omega}\right),
\end{align}
and $\tilde{C} ^{++  -}_{321}\left(\boldsymbol{\omega}\right)=-2i\tanh\frac{\beta \omega_1}{2}  \tilde{C} ^{+++}_{321}\left(\boldsymbol{\omega}\right)=0$, as determined by the generalized FDT.

\bibliography{refs}

\section*{Acknowledgments}
\paragraph*{Funding:}
This work was supported by the Quantum Science and Technology - National Science and Technology Major Project No. 2023ZD0300600, the National Natural Science Foundation of China/Hong Kong Research Council Collaborative Research Scheme Project CRS-CUHK401/22, the Hong Kong Research Grants Council Senior Research Fellow Scheme Project SRFS2223-4S01, and the New Cornerstone Science Foundation.
\paragraph*{Competing interests:}
There are no competing interests to declare.

\end{document}